\newtheorem{proposition}{Proposition}
\newtheorem{remark}{Remark}
\newenvironment{proof}{\paragraph{Proof}}{\hfill$\blacksquare$}
\def\ps@pprintTitle{%
\let\@oddhead\@empty
\let\@evenhead\@empty
\def\@oddfoot{\footnotesize\itshape
\ifx\@journal\@empty Elsevier
\else\@journal\fi\hfill\today}%
\let\@evenfoot\@oddfoot}
\journal{Preprint}%
\begin{document}
%
\begin{frontmatter}

\title{\textbf{Artificial intelligence and the skill premium}\tnoteref{t1}}


%
\tnotetext[t1]{Corresponding author: David E. Bloom}
\author[harvard]{David E. Bloom}
\ead{dbloom@hsph.harvard.edu}
\author[wu]{Klaus Prettner}
\ead{klaus.prettner@wu.ac.at}
\author[unistra]{Jamel Saadaoui}
\ead{saadaoui@unistra.fr}
\author[qdl]{Mario Veruete}
\ead{veruete.mario@quantum-datalab.com}
\address[harvard]{Department of Global Health and Population, Harvard TH Chan School of Public Health, Boston, MA, 02115, USA}
\address[wu]{Vienna University of Economics and Business, Department of Economics, Welthandelsplatz 1, 1020 Vienna, Austria}
\address[unistra]{University of Strasbourg, University of Lorraine, BETA, CNRS, 67000, Strasbourg, France}
\address[qdl]{Quantum DataLab, 33000, Bordeaux, France }

\vspace{150mm}
\begin{abstract}
\noindent What will likely be the effect of the emergence of ChatGPT and other forms of artificial intelligence (AI) on the skill premium? To address this question, we develop a nested constant elasticity of substitution production function that distinguishes between industrial robots and AI. Industrial robots predominantly substitute for low-skill workers, whereas AI mainly helps to perform the tasks of high-skill workers. We show that AI reduces the skill premium as long as it is more substitutable for high-skill workers than low-skill workers are for high-skill workers.
\begin{keyword} Automation, Artificial Intelligence, ChatGPT, Skill Premium, Wages, Productivity.
\JEL J30, O14, O15, O33.
\end{keyword}
\end{abstract}
\end{frontmatter}

\newpage

\section{Introduction}

\label{sec:intro}

In recent decades, industrial robots have become an increasingly important substitute for workers performing relatively routine mechanical tasks in the manufacturing sector. The worldwide stock of operative industrial robots has increased strongly, particularly since the global economic and financial crisis of 2008-2009 \citep[cf.][]{Abeliansky2020, PrettnerBloom2020_adapted, Jurkat2022}. Recent research indicates that this trend has put downward pressure on the wages of low-skill workers, much more so than high-skill workers \citep[cf.][]{AcemogluRestrepo2018a, AcemogluRestrepo2019, 
dauth_adjustment_2021, Cords2022}. As a consequence, the skill premium has increased \citep[cf.][]{Lankisch2019, PrettnerStrulik2019published}.

With the emergence of ChatGPT in the fall of 2022 and, more generally, with the impressive improvements made in artificial intelligence (AI) recently, the question arises as to how the future evolution of the skill premium will be affected \citep[cf.][]{AcemogluRestrepo2018lowskill}. This is because, in contrast to industrial robots, AI predominantly substitutes for tasks performed by high-skill workers. For example, AI-based models and devices are increasingly used to diagnose diseases, develop drugs, write reports, code, or simply generate inspiring ideas in fields such as marketing and research and development. Since these tasks are often non-routine and performed by high-skill workers, AI may put downward pressure on their wages and thereby also on the skill premium.

To analyze the effects of AI on the skill premium at the aggregate level, we develop a general nested constant elasticity of substitution (CES) production function in which robots substitute for low-skill workers and AI substitutes for high-skill workers. We allow for imperfect substitutability of workers with different skill levels by robots and AI and derive a condition under which the emergence of AI would reduce the skill premium.

\section{AI and the skill premium: theoretical considerations}
\label{sec:mod}

Since automation in terms of industrial robots predominantly affects low-skill workers performing routine mechanical tasks, whereas automation in terms of AI predominantly affects high-skill workers, we develop a nested CES production function for the analysis of the differential effects of industrial robots and AI on wages. Consider that the aggregate production function is given by
\begin{equation}\label{eq:aggprod2}
Y_t := {K_t}^\alpha{\left[\beta_3 {\left(\beta_1 {L_{u,t}}^\theta+(1-\beta_1){P_t}^\theta\right)}^{\tfrac{
\gamma}{\theta}}+(1-\beta_3){\left(\beta_2 {L_{s,t}}^\varphi +(1-\beta_2) {G_t}^{\varphi} \right)}^{\tfrac{\gamma}{\varphi}}\right]}^{\tfrac{1-\alpha}{\gamma}}. 
\end{equation}
Here, $Y_t$ is output in period $t$; $\beta_1$, $\beta_2$, and $\beta_3$ refer to input shares; $L_{u,t}$ is employment of low-skill workers; $P_t$ denotes the stock of operative industrial robots; $\theta$ determines the elasticity of substitution between low-skill workers and robots in low-skill intensive production tasks; $L_{s,t}$ denotes employment of high-skill workers; $G_t$ refers to the stock of high-skill replacing AI; $\varphi$ determines the elasticity of substitution between high-skill workers and AI in high-skill intensive production tasks; $\gamma$ determines the elasticity of substitution between low-skill intensive and high-skill intensive production tasks; $K_t$ refers to the traditional physical capital stock (machines, assembly lines); and $\alpha$ is the elasticity of output with respect to traditional capital input. For the attainable values of the parameters, we consider the reasonable range $\alpha,\beta_1,\beta_2,\beta_3,\in(0,1)$ and $\gamma,\theta,\varphi \in  (0,1]$.\footnote{The case of $\theta=\varphi=1$ is analyzed by \cite{Hufnagl2023}.}  



Using this production function\footnote{\cite{Steigum2011}, \cite{Lankisch2019}, \cite{Prettner2017}, \cite{AntonyKlarl2019}, and \cite{Cords2022} use various functions that are nested in our general CES production function to analyze the effects of automation on economic growth, wages, the labor income share, and unemployment, but all without the presence of AI.}, assuming perfect competition, and normalizing the price of final output to unity, allows us to derive the wage rates of low-skill and high-skill workers ($w_u$ and $w_s$, respectively) as the marginal product of the corresponding production factor: 
\begin{align}
    w_{u} & := \frac{\partial Y_t}{\partial L_{u,t}}\\
          & = (1-\alpha)\beta_1\beta_3 {K_t}^\alpha {L_{u,t}}^{\theta-1}{\left(\beta_1 {L_{u,t}}^\theta+(1-\beta_1){P_t}^\theta\right)}^{\tfrac{\gamma}{\theta}-1}\nonumber\\
          &\times {\left[\beta_3 {\left(\beta_1 {L_{u,t}}^\theta+(1-\beta_1){P_t}^\theta\right)}^{\tfrac{\gamma}{\theta}}+(1-\beta_3){\left(\beta_2 {L_{s,t}}^\varphi+(1-\beta_2){G_t}^\varphi \right)}^{\tfrac{\gamma}{\varphi}}\right]}^{\tfrac{1-\alpha-\gamma}{\gamma}},\nonumber
\end{align}
and
\begin{align}
    w_{s} & :=\frac{\partial Y_t}{\partial L_{s,t}}\\
    & = (1-\alpha) \beta_2 (1-\beta_3) {K_t}^\alpha {\left( \beta_2 {L_{s,t}}^{\varphi} +(1-\beta_2) {G_t}^\varphi\right)}^{\tfrac{\gamma}{\varphi}-1} \nonumber\\
&\times {\left[ \beta_3 {\left(\beta_1 {L_{u,t}}^\theta +(1-\beta_1) {P_t}^\theta\right)}^{\tfrac{\gamma}{\theta}} +(1-\beta_3) {\left(\beta_2 {L_{s,t}}^\varphi+(1-\beta_2){G_t}^\varphi \right)}^{\tfrac{\gamma}{\varphi}}\right]}^{\tfrac{1-\alpha-\gamma}{\gamma}}. \nonumber
\end{align}
Dividing $w_s$ by $w_u$ yields the skill premium, i.e., the factor by which the wages of high-skill workers exceed the wages of low-skill workers: 
\begin{align}
\frac{w_s}{w_u}    & = \frac{\beta_2(1-\beta_3)}{\beta_1\beta_3} {L_{s,t}}^{\varphi-1} {L_{u,t}}^{1-\theta} {\left(\beta_1 {L_{u,t}}^\theta +(1-\beta_1) {P_t}^\theta\right)}^{1-\tfrac{\gamma}{\theta}}  \times {\left(\beta_2 {L_{s,t}^\varphi+(1-\beta_2){G_t}^\varphi}\right)}^{\tfrac{\gamma}{\varphi}-1}.\nonumber
\end{align}


The skill premium is an important measure of wage inequality. Its explicit expression allows us to establish the following central result.

\begin{proposition}
The growing use of AI, ceteris paribus, reduces wage inequality between high-skill and low-skill workers, as long as AI is more substitutable for high-skill workers than low-skill workers are for high-skill workers.
\end{proposition}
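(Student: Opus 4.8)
The plan is to establish the proposition by computing the partial derivative of the skill premium $w_s/w_u$ with respect to $G_t$ (the stock of AI) and determining its sign, since "the growing use of AI, ceteris paribus, reduces wage inequality" is precisely the statement that $\partial(w_s/w_u)/\partial G_t < 0$. The crucial simplifying observation is that, in the explicit expression for $w_s/w_u$ derived just above, $G_t$ enters through a single factor, namely $\bigl(\beta_2 L_{s,t}^\varphi + (1-\beta_2) G_t^\varphi\bigr)^{\gamma/\varphi - 1}$. Every remaining factor (the constant prefactor $\beta_2(1-\beta_3)/(\beta_1\beta_3)$, the powers $L_{s,t}^{\varphi-1}$ and $L_{u,t}^{1-\theta}$, and the low-skill aggregate raised to $1-\gamma/\theta$) is independent of $G_t$ and strictly positive under the maintained ranges $\alpha,\beta_1,\beta_2,\beta_3\in(0,1)$, $\gamma,\theta,\varphi\in(0,1]$, and positive factor quantities.

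First I would isolate this $G_t$-dependent factor, set $A_t := \beta_2 L_{s,t}^\varphi + (1-\beta_2) G_t^\varphi > 0$, and differentiate $A_t^{\gamma/\varphi-1}$ by the chain rule, obtaining $\bigl(\gamma/\varphi - 1\bigr)\, A_t^{\gamma/\varphi-2}\,(1-\beta_2)\varphi\, G_t^{\varphi-1}$. Because $A_t^{\gamma/\varphi-2}>0$, $(1-\beta_2)\varphi>0$, and $G_t^{\varphi-1}>0$, and because the collected $G_t$-free factor is strictly positive, the sign of $\partial(w_s/w_u)/\partial G_t$ is governed entirely by the sign of $\gamma/\varphi - 1$. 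Hence the skill premium is strictly decreasing in $G_t$ if and only if $\gamma/\varphi < 1$, i.e. $\varphi > \gamma$.

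The remaining, and conceptually most important, step is to verify that the bare parameter inequality $\varphi > \gamma$ is exactly the economic hypothesis stated in the proposition. Recalling that a CES nest with exponent $\rho$ carries elasticity of substitution $\sigma = 1/(1-\rho)$, the substitutability of AI for high-skill workers is $\sigma_\varphi = 1/(1-\varphi)$ (governed by the inner high-skill nest), whereas the substitutability of low-skill for high-skill workers is governed by the outer nest, giving $\sigma_\gamma = 1/(1-\gamma)$. The clause "AI is more substitutable for high-skill workers than low-skill workers are for high-skill workers" therefore reads $\sigma_\varphi > \sigma_\gamma$; since both denominators $1-\varphi$ and $1-\gamma$ lie in $[0,1)$, the map $\rho \mapsto 1/(1-\rho)$ is increasing, so this is equivalent to $\varphi > \gamma$. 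Combining with the sign analysis yields $\partial(w_s/w_u)/\partial G_t < 0$, which is the claim.

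I expect the only genuine obstacle to be this final translation rather than the calculus. One must correctly identify which nest governs the "low-skill versus high-skill" comparison, namely the outer $\gamma$-nest and not the inner $\theta$-nest, and then invoke the monotonicity of $\sigma(\rho)=1/(1-\rho)$ on $(0,1]$ to convert the elasticity comparison into the clean exponent inequality $\varphi > \gamma$. The differentiation itself is routine, and the positivity of every factor that does not carry the sign follows immediately from the stated parameter ranges.
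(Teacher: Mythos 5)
Your proposal is correct and follows essentially the same route as the paper: differentiate the skill premium with respect to $G_t$, observe that every factor except $\left(\gamma/\varphi - 1\right)$ is strictly positive, and conclude that the premium falls in $G_t$ if and only if $\gamma < \varphi$. Your added step of translating the economic hypothesis into $\varphi > \gamma$ via the monotonicity of $\sigma(\rho) = 1/(1-\rho)$ is left implicit in the paper (it surfaces only in the subsequent Remark), but it is a faithful elaboration rather than a different argument.
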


\begin{proof}

To show this, we compute the derivative of the skill premium with respect to the use of AI as
\begin{align}\label{eq:wageratio_G}
    \frac{\partial (w_s/w_u)}{\partial G_t} & = \frac{\varphi \beta_2(1-\beta_2)(1-\beta_3)\left(\frac{\gamma}{\varphi}-1\right)}{\beta_1\beta_3}{G_t}^{\varphi-1} {L_{s,t}}^{\varphi-1} {L_{u,t}}^{1-\theta}\\
    & \times{\left(\beta_1{L_{u,t}}^{\theta} + (1-\beta_1) {P_t}^\theta \right)}^{1-\tfrac{\gamma}{\theta}} {\left(\beta_2 {L_{s,t}}^\varphi +(1-\beta_2) {G_t}^\varphi \right)}^{\tfrac{\gamma}{\varphi}-2}. \nonumber
\end{align}
Since we assume that $L_{u,t}>0$, $L_{s,t}>0$, $P_t>0$, and $G_t>0$ hold for any $t\geq0$, it follows that the sign of $\partial_{G_t}(w_s/w_u)$ is determined by the sign of $\gamma/\varphi-1$. In particular, $w_s/w_u$ decreases in $G_t$ if and only if $\gamma<\varphi$.
\end{proof} 

\begin{remark}
The use of AI is neutral in its impact on the skill premium if $\gamma/\varphi=1$, which implies that AI is equally substitutable for high-skill workers as low-skill workers are.
\end{remark}

The intuition behind this result is that the deployment of AI replaces high-skill workers directly, but it also substitutes for low-skill workers to the extent that high-skill intensive production tasks can substitute for low-skill intensive production tasks according to the CES production structure. As long as substitution is easier between AI and high-skill workers than between low-skill and high-skill workers, the deployment of AI has stronger effects on the wages of high-skill workers than on the wages of low-skill workers. Thus, increasing the use of AI reduces the skill premium.

\section{AI and the skill premium: numerical illustration}
\label{sec:numerics}

To illustrate the effects of AI on the skill premium, we rely on the parameter values and initial conditions summarized in Table \ref{table:parameter_values} and simulate the evolution of the skill premium for an increase in $G_t$. We take a conventional value $\alpha=1/3$ for the elasticity of output with respect to physical capital \citep[cf.][]{Jones1995, Acemoglu2009}, set $\gamma=1/3$ so that the elasticity of substitution between low-skill and high-skill intensive tasks lies comfortably in the range of plausible values \citep[cf.][]{Acemoglu2002, Acemoglu2009}, choose $\theta=3/4$ to get an elasticity of substitution between low-skill workers and industrial robots in low-skill intensive production tasks of 4, and set $\varphi=1/2$ so that AI is not as good a substitute for high-skill workers in performing high-skill intensive tasks as industrial robots are for low-skill workers in performing low-skill intensive tasks. The value of $K_t$ is taken from the \cite{FRBL2023} for the year 2019 and the value of $P_t$ is constructed for the same year following \cite{Prettner2023} who relies on data from the \cite{IFR2022} for the number of operative industrial robots and a projection of robot prices based on the data reported by \cite{Jurkat2021, Jurkat2022}. Finally, we take the employment data for $L_u$ and $L_s$ from the \cite{BLS2020}, assuming that high-skill workers are those with a bachelor's degree or higher, while low-skill workers do not have a university degree.

\begin{table}[h]
\footnotesize
\centering
\caption{Summary of Parameter Values and Initial Levels for the Simulation}
\label{table:parameter_values}
\begin{tabular}{lll}
\hline \hline
\textbf{Parameter} & \textbf{Value} & \textbf{Source / Justification} \\ \hline
$K$ & 69.0 Trillion US\$ & \cite{FRBL2023} \\ 
$L_u$ & 98.3 Million persons & \cite{BLS2020} \\ 
$L_s$ & 58.4 Million persons & \cite{BLS2020} \\ 
$P$ & 17.3 Billion US\$ & \cite{IFR2022}; \cite{Jurkat2021, Jurkat2022} \\ 
$\alpha$ & 1/3 & \cite{Acemoglu2009}; \cite{Jones1995} \\ 
$\gamma$ & 1/3 & \cite{Acemoglu2002} \\ 
$\theta$ & 3/4 & \cite{Jurkat2022} \\ 
$\varphi$ & 1/2 & Chosen such that $0<\varphi<\theta\leq1$ \\ 
$\beta_1$ & 0.9 & The central result is robust to changes in this parameter\\
$\beta_2$ & 0.95 & The central result is robust to changes in this parameter \\
$\beta_3$ & 2/3 & The central result is robust to changes in this parameter \\
\hline
\end{tabular}
\end{table}


In Table \ref{table:results}, we show the simulation results for different AI use values as reflected in $G_t$\footnote{We can observe through Equation \eqref{eq:wageratio_G} that Proposition 1 is invariant to the values of $\beta_1,$ $\beta_2$, and $\beta_3$.}. In the first row, we assume that AI is not yet used in the production process such that $G_t=0$. This leads to a skill premium of about 2, i.e., wages of high-skill workers are twice the wages of low-skill workers. Increasing the use of AI reduces the skill premium. In the second row, $G_t$ is half the value of $P_t$ and the skill premium decreases to about 1.7. In the third row, the value of $G_t$ is now the same as the value of $P_t$ so that the skill premium shrinks further to 1.62. Finally, in the last row, we assume that the value of AI has exceeded the value of industrial robots by a factor of two, which causes the skill premium to shrink to 1.52.

\begin{table}[ht]
\footnotesize
\centering
\caption{Skill premium for various levels of AI ($G_t$)}
\label{table:results}
\begin{tabular}{ll}
\hline \hline
$G_t$ & $w_s/w_u$ \\ \hline
$G_t=0$ & 2.0023 \\
$G_t = 0.5 \cdot P_t$ & 1.6979 \\
$G_t=P_t$ & 1.6152 \\
$G_t=2 \cdot P_t$ & 1.5213 \\
 \hline
\end{tabular}
\end{table}





Our numerical illustrations show that, indeed, AI has the potential to reduce the skill premium. However, three cautionary notes are in order. First, the result depends on the difference between $\varphi$ and $\gamma$. If the values of these parameters are close to each other, the skill premium is relatively insensitive to increasing AI. This shows the importance of having reliable estimates of the relevant elasticities of substitution at the aggregate level. Second, $G_t$ would not change in isolation in reality. Other variables such as $P_t$, $K_t$, and the share of high-skill workers can increase at the same time. If $P_t$ increases in addition to $G_t$, some of the dampening effect of $G_t$ on the skill premium is offset. Finally, there could be labor augmenting technological progress, which raises the productivity of low- and high-skill workers. Such changes would obscure the isolated effect of AI on the skill premium in observable data.

\section{Conclusions}
\label{sec:concl}

We explore the effects of the emergence of AI on the skill premium. To this end, we develop a nested CES production function in which industrial robots predominantly substitute for low-skill workers, whereas AI predominantly substitutes for high-skill workers. We show analytically and numerically that AI has the potential to reduce the skill premium and thereby mitigate or even reverse observed increases in inequality that have emerged in recent decades. 

In future research, it would be useful to construct precise estimates for the relevant elasticities of substitution to assess the plausibility of our parameter assumptions. Furthermore, the production structure with robots as low-skill automation and AI as high-skill automation could be introduced into full-fledged general equilibrium models to analyze the effects of AI on economic growth, employment, and welfare \citep[cf.][]{AcemogluRestrepo2018a, PrettnerStrulik2019published, sequeira2021robots, Cords2022, Hemous2022, Shimizu2023, thuemmel2023optimal}. While doing so is beyond the scope of this paper, such a framework would allow consideration of i) the dynamic effects of the endogenous accumulation of the different capital stocks in the model, ii) an endogenous education decision of whether to stay low-skill or to become high-skill, which would allow for richer dynamics of the skill premium, and iii) the 
evolution of social welfare subject to different welfare functions from egalitarian (Rawlsian) to utilitarian (Benthamite or Millian).

\bibliographystyle{apalike}
\bibliography{References_2}

\newpage

\begin{center}
  \Large{\textbf{Appendix --- Artificial intelligence and the skill premium}}
\end{center}

\vspace{0.1cm}

\begin{center}
  {Authors: David E. Bloom\footnote{Corresponding author: David E. Bloom}, Klaus Prettner, Jamel Saadaoui, Mario Veruete}
\end{center}

\vspace{0.1cm}

\section{Notebook for replication and additional results}

All the computations of the paper and dynamic graphical visualization can be retrieved from the following Wolfram notebook:
\begin{center}
\href{https://www.wolframcloud.com/obj/mariov/Published/Appendix.nb}{https://www.wolframcloud.com/obj/mariov/Published/Appendix.nb}
\end{center}
In this appendix, for purposes of clarity, we follow \cite{Lankisch2019} and \cite{PrettnerBloom2020_adapted} and recall the analytical effect of automation with and without AI on the wage levels of high-skill  workers. The aggregate production function allows for two types of labor but does not consider AI\footnote{Note that equation \eqref{eq:prodwithoutAI} is a special case of equation (4) in the main text of the paper when $G_t=0$, $\theta=1$, $\beta_1=1/2$, $\beta_2={\left(\frac{\beta-1}{1-2^\gamma\beta}\right)}^{\varphi/\gamma}$, and $\beta_3=2^\gamma \beta$.}. In this case, we have
\begin{equation}\label{eq:prodwithoutAI}
    Y_t=\left[(1-\beta) L_{s,t}^\gamma+\beta\left(P_t+L_{u,t}\right)^\gamma\right]^{\frac{1-\alpha}{\gamma}} K_t^\alpha,
\end{equation}
where $L_{s,t}$ refers to the number of high-skill workers, $L_{u,t}$ to the number of low-skill workers, $\beta \in(0,1)$ is the weight of low-skill workers in the production process, and $\gamma \in(-\infty, 1]$ determines the elasticity of substitution between both types of workers. For $\gamma=1$, workers with different skills are perfect substitutes and for $\gamma \rightarrow-\infty$, they are perfect complements.

Without AI, the effect of an increase in the stock of automation capital on the wage of high-skill workers is given by:
$$
\frac{\partial w_s}{\partial P_t}=(1-\alpha) Y_t \frac{(1-\beta) \beta L_{s,t}^\gamma}{L_{s,t}\left(P_t+L_{u,t}\right)^{1-\gamma}} \frac{1-\alpha-\gamma}{\left[(1-\beta) L_{s,t}^\gamma+\beta\left(P_t+L_{u,t}\right)^\gamma\right]^2}=\left(\begin{array}{l}
\geq 0 \text { for } 1-\alpha \geq \gamma, \\
<0 \text { for } 1-\alpha<\gamma 
\end{array}\right).
$$
The sign of this derivative is ambiguous and depends on the substitutability between both types of workers. If $\gamma$ is high, such that substitution is easy, increasing the stock of robots would reduce the wages of high-skill workers. If, in contrast, the substitutability between low-skill workers and high-skill workers is low, automation will raise the wages of high-skill workers.

The growing use of AI has important consequences for the wages of high-skill workers, as shown in Table 2 in the main text of the paper. Following logic similar to that in the case without AI, we can see that the growing use of AI will decrease the wages of high-skill workers when the substitutability between high-skill workers and AI ($\varphi$) is higher than the substitutability between low-skill workers and high-skill workers ($\gamma$), as shown in Proposition 1 in the main text of the paper. The implications in terms of the skill premium are now different from those of automation in terms of an increasing stock of industrial robots.

\begin{figure}[!ht]
    \begin{center}
    \caption{Skill premium for various levels of AI}
    \label{fig:Figure1}
    \includegraphics[width=0.65\linewidth]{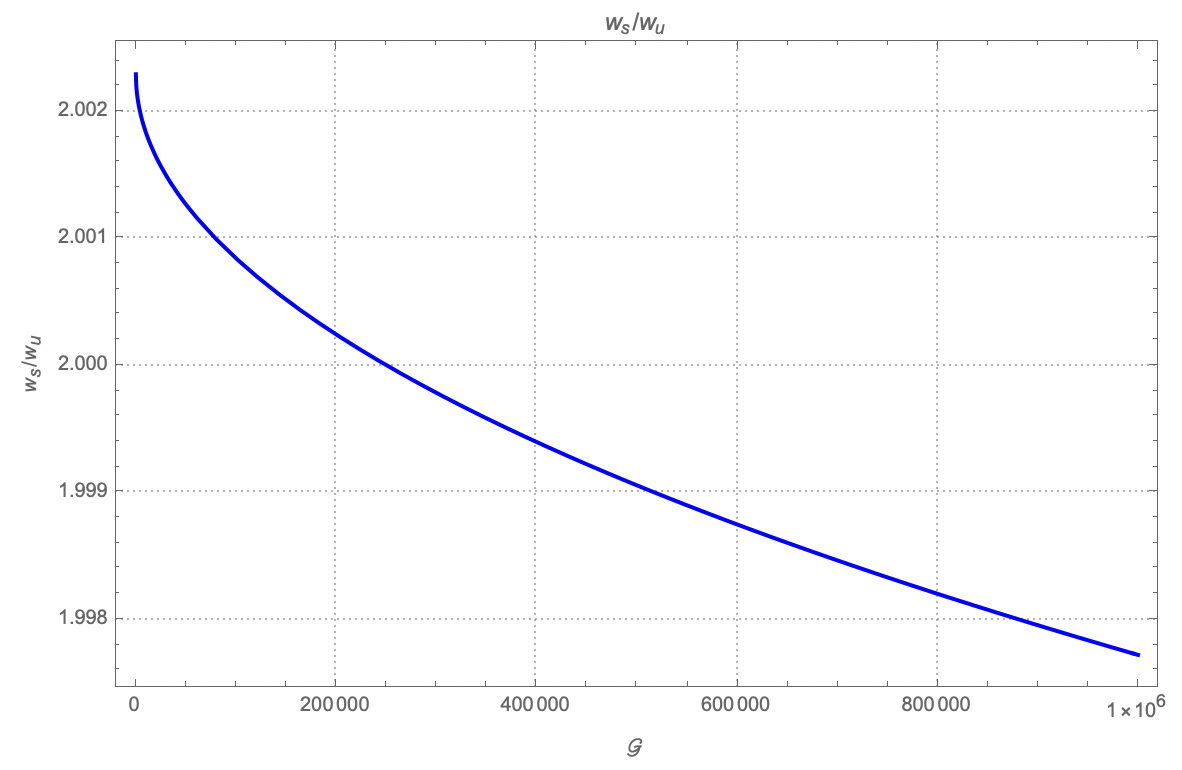}
    
    \footnotesize{Note: The different levels of AI are on the x-axis.}
        \end{center}
    
\end{figure}

We illustrate the complex interplay between AI ($G$) and industrial robots ($P$) for wage outcomes on 3D graphs in Figures \ref{fig:ws}, \ref{fig:wu}, and \ref{fig:wswu}, capturing the surfaces represented by \(w_s(G,P) \), \(w_u(G,P)\), and the ratio \(w_s/w_u(G,P)\). In these graphs, the red points signify high levels of industrial robots, but a complete absence of AI ($G=0$). By contrast, the blue points indicate the presence of both high AI and high industrial robot levels.

For the wages of high-skill workers, as represented by \(w_s\) in Figure \ref{fig:ws}, the graph reveals the following. The wage is at its peak when AI is at a minimum but the stock of industrial robots is high. An increase in AI leads to a noticeable decline in \( w_s \). 

At the other end of the spectrum, the wage rate of low-skill workers, denoted by \( w_u \) in Figure \ref{fig:wu}, offers a different perspective. It tends to be highest in environments with minimal influence from AI and industrial robots. However, the red point, indicative of an environment with a high stock of industrial robots but no AI, presents a challenge for low-skill workers, causing a low wage. Yet, there is a silver lining: When AI and the stock of industrial robots rise in tandem, low-skill workers witness an increase in their wage rate.

Shifting the focus to the wage disparity, represented by the \( w_s/w_u \) ratio in Figure \ref{fig:wswu}, the graph paints a vivid picture of the evolving wage dynamics. The skill premium is most pronounced at the red point, where no AI and a large stock of industrial robots tip the scale in favor of high-skill workers, amplifying their wage considerably more than their low-skill counterparts. But technology also offers a way to narrow this gap. As the blue point suggests, when both AI and the stock of industrial robots reach their zenith, they interact and reduce this wage disparity.


\begin{figure}
    \centering
    \includegraphics[width=10cm]{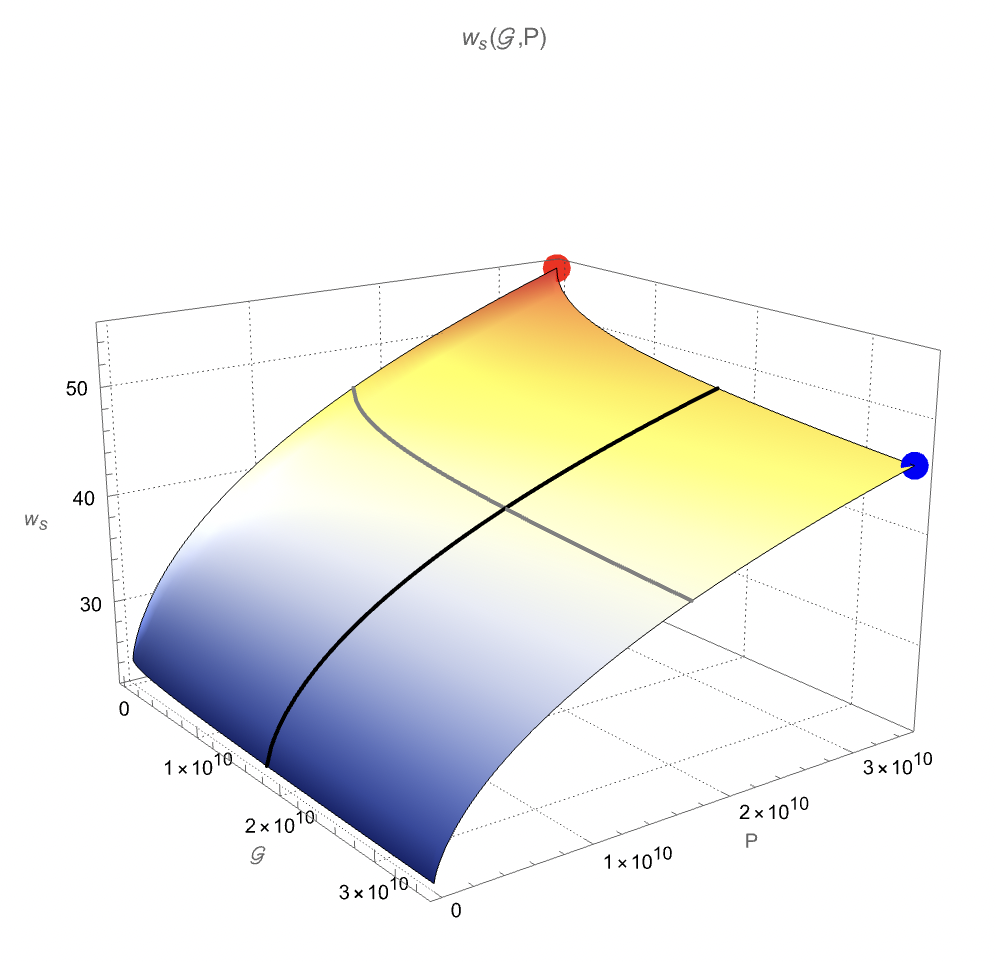}
    \caption{High-skill wages ($w_s$) for various levels of AI and Industrial Robots}
    \label{fig:ws}
\end{figure}

\begin{figure}
    \centering
    \includegraphics[width=10cm]{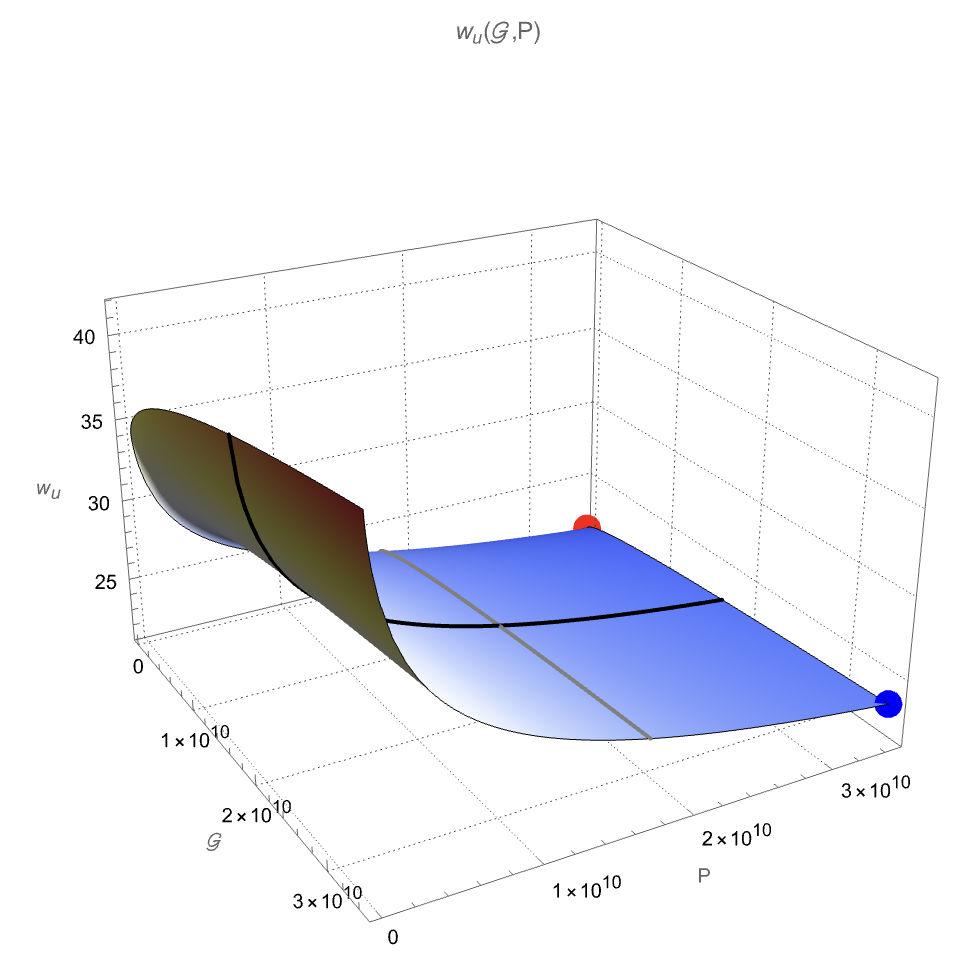}
    \caption{Low-skill wages ($w_u$) for various levels of AI and Industrial Robots}
    \label{fig:wu}
\end{figure}

\begin{figure}
    \centering
    \includegraphics[width=10cm]{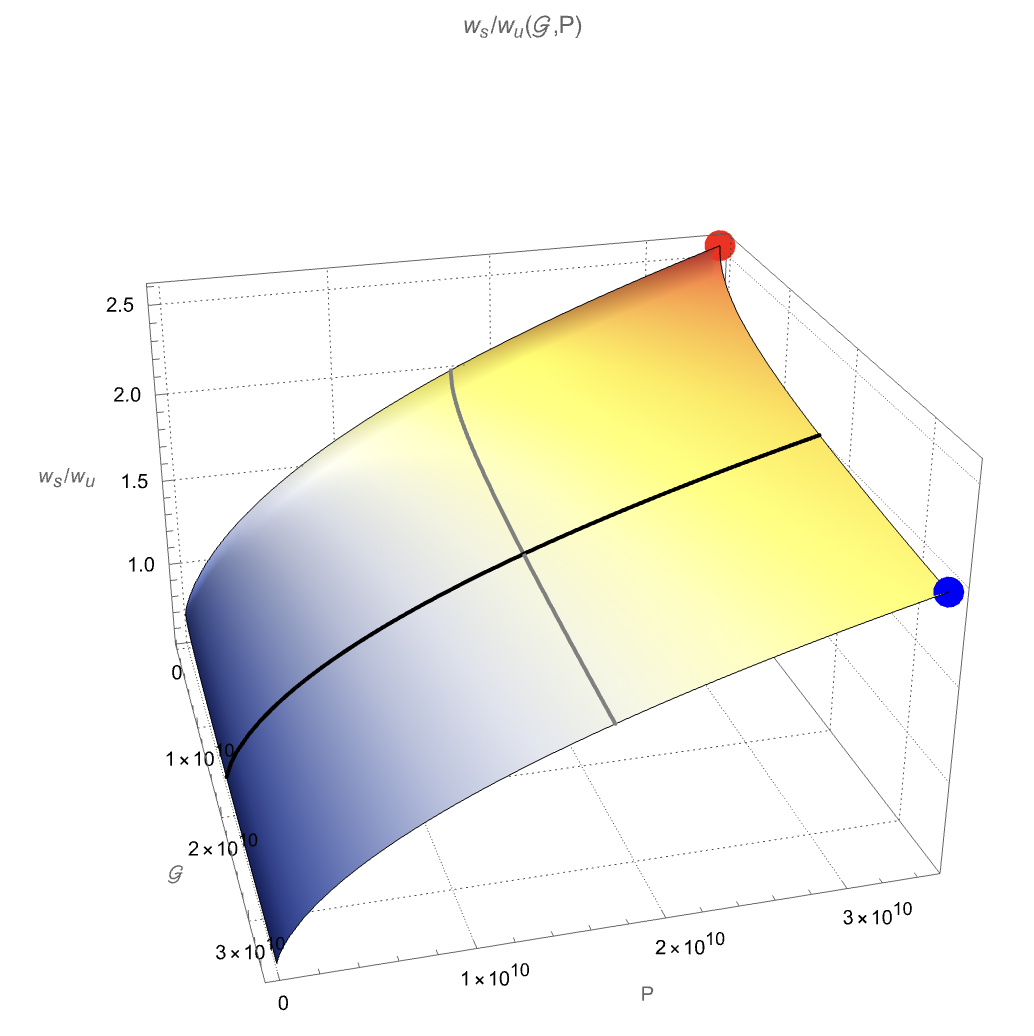}
    \caption{Skill premium for various levels of AI and Industrial Robots}
    \label{fig:wswu}
\end{figure}

\end{document}